\documentclass[prl,showpacs,amsmath,amssymb,amsfonts,lengthcheck,longbibliography]{revtex4-2}

\usepackage[letterpaper,top=2cm,bottom=2cm,left=3cm,right=3cm,marginparwidth=1.75cm]{geometry}

\usepackage[dvipsnames]{xcolor}
\usepackage{tcolorbox}
\tcbuselibrary{breakable}
\usepackage{graphicx, color, graphpap}
\usepackage{enumitem}
\usepackage{amssymb}
\usepackage{amsthm}
\usepackage{dsfont}
\usepackage{mathtools}
\usepackage{multirow}
\usepackage[colorlinks=true,citecolor=magenta,linkcolor=blue]{hyperref}
\usepackage[T1]{fontenc}
\usepackage{bbm}
\usepackage{thmtools,thm-restate}
\usepackage{verbatim}
\usepackage{mathtools}
\usepackage{titlesec}
\usepackage{amsmath}
\usepackage[normalem]{ulem}
\usepackage{circuitikz}
\usepackage{braket}
\usepackage[caption=false]{subfig} 

\usepackage{listings}
\lstset{
  basicstyle=\ttfamily,
  mathescape
}

\usepackage{csquotes}

\newtheorem{proposition}{Proposition}

\textwidth 16.5 cm
\long\def\ca#1\cb{} 

\newcommand{\floor}[1]{\lfloor #1 \rfloor}








\renewcommand{\geq}{\geqslant}
\renewcommand{\leq}{\leqslant}

\begin{document}

\title{Error Mitigation for Thermodynamic Computing}
\author{Maxwell Aifer, Denis Melanson, Kaelan Donatella, Gavin Crooks, Thomas Ahle, and Patrick J. Coles}

\affiliation{Normal Computing Corporation, New York, New York, USA}

\begin{abstract}
    While physics-based computing can offer speed and energy efficiency compared to digital computing, it also is subject to errors that must be mitigated. For example, many error mitigation methods have been proposed for quantum computing. However this error mitigation framework has yet to be applied to other physics-based computing paradigms. In this work, we consider thermodynamic computing, which has recently captured attention due to its relevance to artificial intelligence (AI) applications, such as probabilistic AI and generative AI. A key source of errors in this paradigm is the imprecision of the analog hardware components. Here, we introduce a method that reduces the overall error from a linear to a quadratic dependence (from $\epsilon$ to $\epsilon^2$) on the imprecision $\epsilon$, for Gaussian sampling and linear algebra applications. The method involves sampling from an ensemble of imprecise distributions associated with various rounding events and then merging these samples. We numerically demonstrate the scalability of this method for dimensions greater than 1000. Finally, we implement this method on an actual thermodynamic computer and show $20\%$ error reduction for matrix inversion; the first thermodynamic error mitigation experiment.
\end{abstract}

\maketitle

\textit{Introduction.---}Physics-based computing has a rich history. For example, classical analog computing, which was the dominant form of computing in the early 1900s, can be viewed as physics-based, as it typically uses classical dynamical systems to solve differential equations. However, digital computers supplanted analog ones in the late 20th century due to their reliability and lack of errors. Unlike digital computing, dealing with errors is a key issue for physics-based computing.

Certainly this is the most important issue that quantum computing currently faces~\cite{preskill2018quantum}. Hence, quantum computing researchers have proposed a host of strategies to reduce the impact of errors, called error mitigation methods~\cite{endo2021hybrid,cai2023quantum}. Error mitigation is largely data-driven~\cite{lowe2020unified,bultrini2023unifying,strikis2020learning}; the basic idea is shown in Fig.~\ref{fig:ThermiesOverview}(a). For a given physical computing device, multiple inputs are sent in, the outputs are collected, and then data is post-processed (typically on a classical digital computer) to obtain a high quality result. For example, collecting data at various noise levels and extrapolating to the zero-noise limit is called Zero-Noise Extrapolation~\cite{temme2017error,kandala2018error,giurgica2020digital}, shown in Fig.~\ref{fig:ThermiesOverview}(b). Similarly, Clifford Data Regression involves collecting data for Clifford quantum circuits that are close to a target circuit and performing regression to infer the target expectation value~\cite{czarnik2020error}. In the same spirit, Probabilistic Error Cancellation takes a linear combination of data from various noisy circuits to cancel out the effect of intrinsic hardware noise~\cite{temme2017error}. These error mitigation methods are central to obtaining quantum advantage in era before fault-tolerant quantum computing~\cite{larose2022error,kim2023evidence}.

However, error mitigation has received very little attention outside the field of quantum computing. Very few error mitigation methods have been proposed for classical analog computing, as most strategies to reduce errors have been at the hardware level rather than at the data-processing level. Meanwhile, classical analog computing is recently making a comeback in light of its suitability for artificial intelligence (AI) applications~\cite{wired2023}. One of the most important areas of AI is probabilistic AI~\cite{murphy2022probabilistic}, which includes generative AI algorithms like diffusion models, Monte Carlo inference algorithms, and Bayesian neural networks. 

Recently, it was argued that thermodynamic computers are the natural match for probabilistic AI applications~\cite{coles2023thermodynamic}. Such computers are analog devices that utilize naturally occurring stochastic fluctuations to generate novel samples (e.g., for generative AI). Because noise is intentionally used a resource in this paradigm, thermodynamic computers are largely robust to (unintentional) noise like thermal noise. This is clearly distinct from standard analog computing, where noise is an issue. However, like standard analog computing, thermodynamic computers do face precision issues, where the precision of the physical components determines the accuracy of the computation.

\begin{figure}
    \centering
    \includegraphics[width=0.48\textwidth]{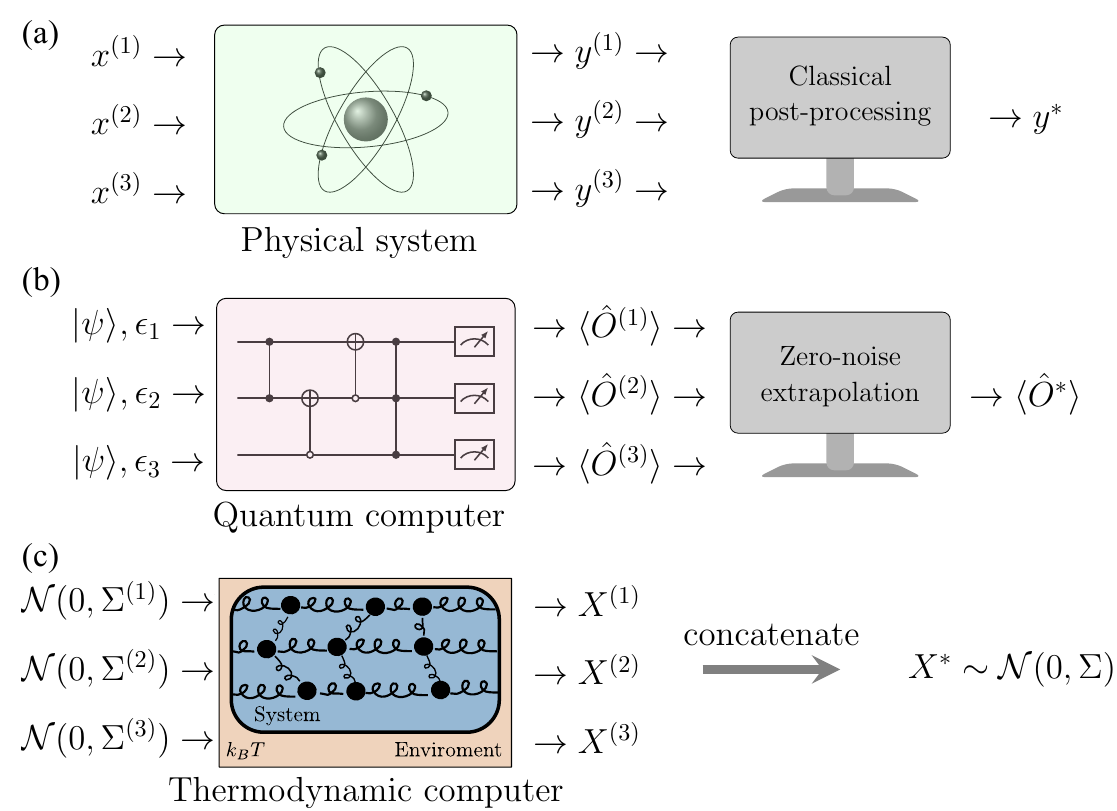}
    \caption{\textbf{Overview of Error Mitigation.} 
    (a) The general framework for error mitigation involves probing the physical computing device with multiple inputs $\{x^{(i)}\}_i$, where such inputs are typically related to the target in some way (e.g., some perturbation of the target input). The collection of outputs $\{y^{(i)}\}_i$ are then post-processed to obtain a high quality output $y^*$. (b) A plethora of error mitigation methods have been developed for quantum computers. Here we show a common one, Zero-Noise Extrapolation, where the noise level of the device is varied and the resulting expectation values $\{\langle O^{(i)}\rangle\}_i$  are fit with a curve that extrapolates to the zero noise limit, leading to the improved expectation value $\langle O^*\rangle$. (c) In thermodynamic computing, the goal is to sample from a target probability distribution $\mathcal{N}(0, \Sigma)$ with an imprecise physical device. Our error mitigation method, Thermies, randomly samples from multiple distributions $\{\mathcal{N}(0, \Sigma^{(i)})\}_i$ that are close to the target and then merges the obtained samples $\{X^{(i)}\}_i$ to obtain a high quality set of samples $X^*$ for the target distribution. }
    \label{fig:ThermiesOverview}
\end{figure}

Thermodynamic computing (TC)~\cite{conte2019thermodynamic,coles2023thermodynamic,aifer2023thermodynamic,duffield2023thermodynamic,lipka2023thermodynamic,hylton2020thermodynamic,ganesh2017thermodynamic,melanson2023thermodynamic} represents a promising, new physics-based paradigm for accelerating AI~\cite{coles2023thermodynamic} and linear algebra~\cite{aifer2023thermodynamic,duffield2023thermodynamic}. However, it remains a young field, where very little is known about its capabilities, and only recently was the first thermodynamic computer built~\cite{melanson2023thermodynamic}. In this article, we present the first error mitigation method for TC. Our method is aimed at addressing the key issue that limits the accuracy of TC, namely, imprecision. Thermodynamic computers are sampling devices, and for concreteness we focus here on the task of sampling from multivariate Gaussian distributions. Nevertheless, our analysis will have implications for sampling from non-Gaussian distributions as well as for sampling in the context of generative AI. Moreover, Gaussian sampling is a subroutine in thermodynamic linear algebra algorithms~\cite{aifer2023thermodynamic}; hence our error mitigation method is directly applicable to such algorithms. 

As noted in Fig.~\ref{fig:ThermiesOverview}(c), our method is based on merging the samples from multiple imprecise distributions associated with probabilistically either rounding up or rounding down covariance matrix elements. Our method, called Thermies (THERMIES = THermodynamic ERror Mitigation via Imprecise Ensemble Sampling), reduces the scaling from $\varepsilon$ (without error mitigation) to $\varepsilon^2$, where $\varepsilon$ quantifies the degree of imprecision. This quadratic scaling implies that small levels of imprecision do not perturb the distribution, leading to a significant gain in the accuracy of thermodynamic computers. We numerically test our method for dimensions greater than 1000, demonstrating its scalability to large dimensions. Moreover, we show that only a small number of ensemble draws (i.e., a small number of imprecise distributions) is needed to significantly reduce the error, and this number is essentially independent of dimension. Finally, we implement Thermies on a real thermodynamic computer (from Ref.~\cite{melanson2023thermodynamic}), demonstrating a performance improvement for matrix inversion on an actual device.

\bigskip

\textit{Problem Setup.---}The basic primitive that thermodynamic computers perform is sampling from some target probability distribution $p(x)$. Thinking of $p(x) \propto \exp(-U(x))$ as the exponential of a function $U(x)$, thermodynamic computers encode $U(x)$ in the energy (i.e., the Hamiltonian) of a physical system. This physical system is allowed to relax to thermal equilibrium. Then the Boltzmann distribution corresponds to the target distribution $p(x)$, and hence samples can be drawn from this Boltzmann distribution. 

When $p(x)$ is a Gaussian distribution, $U(x) = \frac{1}{2}x^{T}\Sigma^{-1} x$ is a quadratic form with $\Sigma$ being the covariance matrix~\footnote{Here we assume the mean of the distribution is zero, since the mean can be incorporated in the digital post-processing of the samples obtained from the thermodynamic computer, by adding the mean in as a displacement.}. Thus, mapping a Gaussian distribution to a thermodynamic computer involves mapping the covariance matrix $\Sigma$ to the values of hardware components, such as resistances, capacitances, or inductances in the case of electrical hardware~\footnote{In Prop.~\ref{prop2}, we extend the analysis to differentiable functions $g(\Sigma)$ of the covariance matrix, which then covers protocols or devices for which the user maps $g(\Sigma)$ to the hardware components. An important example of this is where the user maps the precision matrix $P = \Sigma^{-1}$ to the hardware, which is the protocol employed in thermodynamic matrix inversion~\cite{aifer2023thermodynamic}.}. We assume that $\Sigma$ is initially stored in a high precision form on a digital device. One then needs to upload this matrix to the thermodynamic computer. After uploading the matrix, the thermodynamic computer will typically evolve under Langevin dynamics, and hence one will employ a Langevin Monte Carlo algorithm to sample from the relevant probability distribution.

Digital devices allow up to 52 bits of precision, whereas the properties of analog components are represented with much lower precision (e.g., over the range of 3 bits to 10 bits)~\cite{stokes_inside_2007}. This leads to information loss, which impacts the distribution that one samples from on the thermodynamic computer. Perhaps the simplest approach would be to round the matrix elements in order to map the high precision matrix to a low precision matrix. This results in an erroneous matrix, for example:
\begin{equation}
    \Sigma^a = \Sigma^t + \varepsilon R,
\end{equation}
where $\Sigma^t$ is the true covariance matrix, $R$ is a matrix that represents the error due to rounding, and $\varepsilon$ quantifies the rounding error (note that $\varepsilon$ serves as a quantitative measure of the imprecision, assuming $R$ has unit norm). One can show that this simple protocol, of substituting the true matrix with a low precision surrogate, leads to a first-order correction to the probability distribution. That is, the distance of the low-precision distribution to the true distribution scales in proportion to $\varepsilon$, for small $\varepsilon$. This scaling is undesirable as it implies that even small imprecision levels will perturb the distribution that one samples from.

\bigskip

\textit{Univariate Thermies.---}Let us now introduce the Thermies error mitigation method. We begin with the one-dimensional case to explain the essential concept. Imagine a Gaussian sampling device that is capable of realizing a random variable $X$ whose probability density function is
\begin{equation}
\label{1D-precision-constraint}
    f_m(x) =\frac{1}{\sqrt{2 \pi  \varepsilon m}} \exp \left(  - \frac{1}{2} \frac{x^2}{\varepsilon m}\right), \: \: m \in \{1, 2, \dots\}.
\end{equation}
That is, the device can sample a mean-zero normally distributed random variable whose variance is a multiple of $\varepsilon$. In practice, one might require samples from a distribution having a variance that is not a multiple of $\varepsilon$. For example, suppose we would like to sample the normal distribution $\mathcal{N}[0, 1.5 \varepsilon]$, i.e., with variance $1.5 \varepsilon$. This distribution is the solid black line in Fig.~\ref{fig:Thermies1D}.

\begin{figure}
    \centering
    \includegraphics[width=0.48\textwidth]{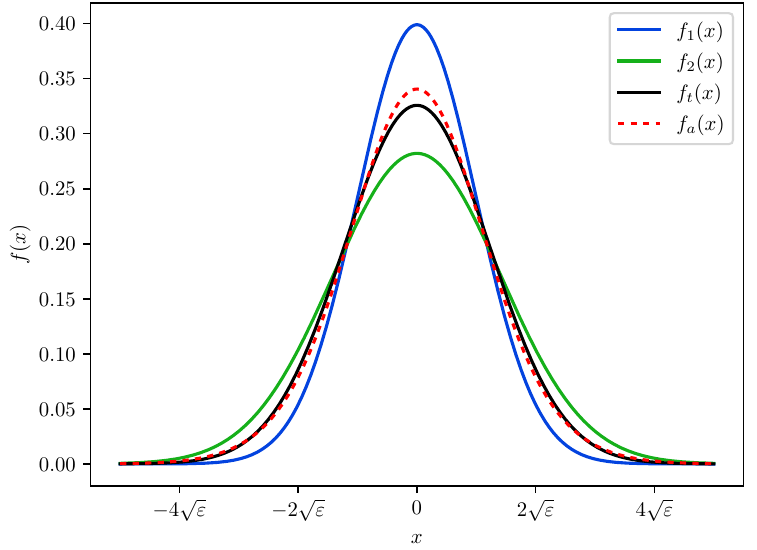}
    \caption{\textbf{Univariate Thermies.} A target distribution $f_t$ with variance $\sigma^2_t= 1.5\varepsilon$ is approximated by interpolating between $f_1$ and $f_2$ whose variances are respectively $\varepsilon$ and $2\varepsilon$. While $f_t$, $f_1$, and $f_2$ are Gaussian, $f_a$ is a Gaussian mixture.}
    \label{fig:Thermies1D}
\end{figure}

In order to (approximately) sample this distribution, we carry out the following procedure, which is the basis of the Thermies method (shown here for the special case of equal probabilities for illustration):
\begin{tcolorbox}[title={Univariate Thermies Protocol}]
\begin{enumerate}
  \item Sample a Bernoulli random variable $B \in \{ 0,1\}$ with $\text{Pr}(B=0)=1/2$ and $\text{Pr}(B=1)=1/2$.
  \item If the outcome is $B=0$, sample $\mathcal{N}[0, \varepsilon]$, and if the outcome is $B=1$, sample $\mathcal{N}[0,2\varepsilon]$.
  \item Record the result as a realization of random variable $X$, without storing the outcome of the Bernoulli trial.
\end{enumerate}
\end{tcolorbox}
The probability density function of the random variable $X$ is then
\begin{equation}
\label{1D-approx-pdf}
f_a(x) = \frac{1}{2}f_1(x) + \frac{1}{2} f_2(x),
\end{equation}
where the subscript $a$ signifies that this distribution is an approximation to a target distribution $f_t$. As the distributions $f_1$ and $f_2$ are both zero mean, $f_a$ is zero mean as well. Moreover, the variance of $f_a$ is 
\begin{align}
    \sigma^2_a &= \int_{-\infty}^\infty dx f_a(x) x^2\\
    &=\frac{1}{2}\int_{-\infty}^\infty dx f_1(x) x^2 + \frac{1}{2}\int_{-\infty}^\infty dx f_2(x) x^2 \\
    & = \frac{1}{2}(\varepsilon) + \frac{1}{2}(2\varepsilon) = 1.5 \varepsilon.
\end{align}
For this example, the $f_a$ distribution is plotted in Fig.~\ref{fig:Thermies1D} as the red dashed line, which matches the target distribution better than either $f_1$ and $f_2$ do.

It is straightforward to see that if we instead form a mixture $f_a = (1-w) f_m + w f_{m+1}$ with $w \in [0,1]$, the variance of $f_a$ is
\begin{equation}
\label{1D-approx-sigma}
\sigma^2_a  = (m+w)\varepsilon,
\end{equation}
which means that we can choose $w$ and $m$ to obtain arbitrary $\sigma^2_a \geq\varepsilon$, so we can ensure that $\sigma^2_a = \sigma^2_t$, which we refer to as variance-matching. Obtaining an approximating distribution with $\sigma^2_a <\varepsilon$ is problematic, an issue which we will revisit later~\footnote{We remark that in the univariate case, the imprecision problem could be avoided by rescaling the random variable to have a realizable variance. That is, we could always define a new random variable $Y \propto X$ such that $Y \sim \mathcal{N}[0, m\varepsilon]$ for some $m \in \{ 1, 2, \dots\}$, so no error mitigation would be necessary. This is generally impossible in the multivariate case, which is why error mitigation is necessary.}.

\bigskip

\textit{Multivariate Thermies.---}We now consider a thermodynamic computer that can sample a $d$-dimensional zero-mean multivariate normal distribution:
\begin{equation}
    \label{multivariate-normal-pdf}
    f_{0;\Sigma}(x) = \frac{1}{(2\pi)^{d/2}\left|  \Sigma \right|^{1/2}} \exp \left( - \frac{1}{2} x^\intercal  \Sigma^{-1} x \right),
\end{equation}
whose covariance matrix has elements that are multiples of the imprecision parameter $\varepsilon$. That is, one can sample any distribution $\mathcal{N}[0, \varepsilon \tilde{\Sigma}]$ with
\begin{equation}
\label{ND-precision-constraint}
    \tilde{\Sigma} \in \text{PSD}_d(\mathbb{Z}),
\end{equation}
with $\text{PSD}_d(\mathbb{Z})$ the set of integer positive semi-definite $d \times d$ matrices. Equation~\eqref{ND-precision-constraint} is the multidimensional generalization of the constraint in Eq.~\eqref{1D-precision-constraint}. To extend Thermies to the multidimensional setting, we would like to match the covariance matrix of a target normal distribution $\mathcal{N}[0, \Sigma^t]$ using an ensemble of realizable distributions $\mathcal{N}[0, \varepsilon \tilde{\Sigma}]$ where $\tilde{\Sigma} \in \text{PSD}_d(\mathbb{Z})$. We now give a procedure for achieving this:
\begin{tcolorbox}[title={Multivariate Thermies Protocol},breakable]
\begin{enumerate}
    \item Compute the residual matrix $R = \Sigma^t/\varepsilon -   \floor{\Sigma^t/\varepsilon}$.
    \item For each pair $(i, j) \in \{1,2\dots d\}^2$ with $i \leq j$, draw a realization of the Bernoulli random variable $B_{ij}$ which has probabilities $\text{Pr}(B_{ij}=0) = 1-R_{ij}$ and $\text{Pr}(B_{ij}=1) = R_{ij}$. Also define $B_{ji} = B_{ij}$, resulting in a matrix of realizations $B \in \{0,1\}^{d \times d}$.
    \item Construct the matrix $\Sigma^B = \varepsilon (\floor{\Sigma^t/\varepsilon} +  B)$, and draw a sample from the normal distribution $\mathcal{N}[0, \Sigma^B]$.
    \item Record the resulting sample as a realization of the random vector $X$, without storing the results of the Bernoulli trials.
\end{enumerate}
\end{tcolorbox}
The procedure just described differs from the univariate procedure in requiring many Bernoulli random variables to be sampled for each error mitigated sample of the vector $X$, whereas a single Bernoulli random variable was needed in the univariate case. Let $D=(d^2 + d)/2$ be the number of Bernoulli random variables drawn. If we vectorize the matrix $B$, then $\vec{B}$ belongs to the $D$-dimensional hypercube's vertex set $\{0,1\}^{D}$. (See Fig.~\ref{fig:hypercube_a} for an example of this hypercube.) There are therefore $2^D$ possible outcomes of the set of Bernoulli trials, resulting in $2^D$ possible covariance matrices $\Sigma^B$. The resulting sample of $X$ which is obtained can then be seen as arising from a Gaussian mixture,
\begin{equation}
\label{ND-approx-pdf}
    f_a = \sum_{\vec{b} \in \{0,1\}^D} w_bf_{0;\Sigma^{b}},
\end{equation}
where the parameters $w_b$ (which we call the weights) are joint probabilities, $w_b = \text{Pr}(B = b)$. 
As all of the components of the mixture are mean zero, $\braket{X}=0$, and we again see that the covariance of the mixture is given by the corresponding convex combination of the components' covariances,
\begin{align}
    \Sigma^a_{ij} &= \int_{-\infty}^\infty d^dx f_a(x) x_i x_j\\
    &=\sum_{\vec{b} \in \{0,1\}^D} w_b\int_{-\infty}^\infty d^dx f_{0;\Sigma^{b}}(x) x_i x_j \\
    & = \sum_{\vec{b} \in \{0,1\}^D} w_b\Sigma^b_{ij}.
\end{align}
We may then write the above as $\Sigma^a = \braket{\Sigma^B}$, where the expected value is understood as being taken with respect to the weights $w_b$. Here we note that, by construction, each matrix $\Sigma^b$ is of the form $\varepsilon \tilde{\Sigma}^b$, with $\tilde{\Sigma}^b \in \text{Sym}_d(\mathbb{Z})$. This means that the elements of $\Sigma^b$ can be encoded in imprecise hardware. However, in order for these matrices to represent physically realizable distributions, we must also have $\tilde{\Sigma}^b \in \text{PSD}_d(\mathbb{Z})$. This is not guaranteed for arbitrary target distributions, an issue which we return to later. In what follows, we refer to the matrices $\Sigma^b$ as the nearest neighbors of the target covariance matrix $\Sigma^t$, and similarly the associated normal distributions $f_{0;\Sigma^b}$ are nearest neighbors of the target distribution $f_{0;\Sigma^t}$.

\begin{figure}%
\centering
\subfloat[][]{%
\label{fig:hypercube_a}%
\centering
\includegraphics[width=0.95\linewidth, height=0.65\linewidth]{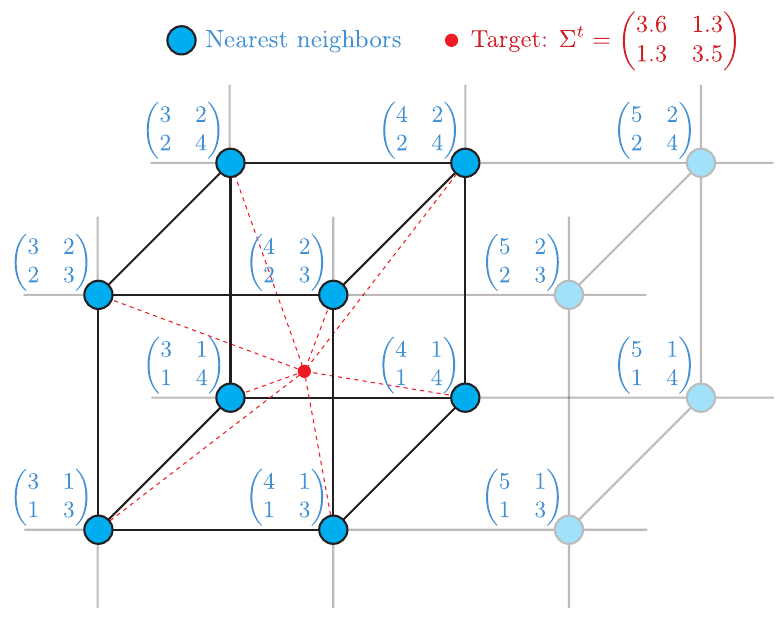}}%
\hspace{1pt}%
\subfloat[][]{%
\label{fig:2d_ellipse_b}%
\includegraphics[width=\linewidth, height=0.8\linewidth]{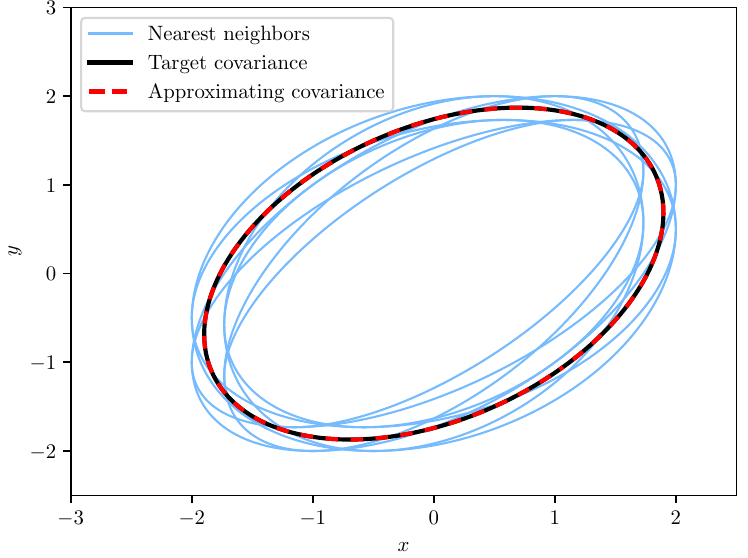}}
\caption[]{\textbf{Two-Dimensional Example.} (a) The hypercube (in this case a cube) representation of the nearest neighbor ensemble is shown for a two-dimensional target covariance matrix $\Sigma^t$. The red dot is $\Sigma^t$ and the blue dots are (imprecise) approximations of the target. The target is inside the convex hull of its nearest neighbors, so it can be expressed as a convex combination of them. (b) The covariance ellipses are shown for the target covariance (black), the $8$ nearest neighbor covariances (blue), and the convex combination of the nearest neighbor covariances (red dashed). The target and approximating covariances exactly coincide due to the covariance matching property of Thermies.}%
\label{fig:2d_example_hypercube}%
\end{figure}

As shown in the Supplemental Material, Thermies results in an approximate distribution that satisfies a covariance matching condition:
\begin{equation}
\label{ND-covariance-matching}
    \Sigma^a = \Sigma^t.
\end{equation}
For illustration, we show a two-dimensional example of Thermies in Figure~\ref{fig:2d_example_hypercube}. Figure~\ref{fig:hypercube_a} displays the hypercube (in this case a cube) of the nearest neighbor ensemble, while Fig.~\ref{fig:2d_ellipse_b} shows the covariance ellipses for this ensemble and illustrates the idea of covariance matching. The details of this example are given in the Supplemental Material.

\bigskip
\textit{Imprecision Dependence.---}It is desirable for the quality of samples to be insensitive to the precision of the hardware implementation. Fortunately, the Thermies protocol eliminates the first order dependence of the approximate distribution on $\varepsilon$ as $\varepsilon$ goes to zero. We formally state this as follows.

\begin{proposition}
\label{epsilon-independence-prop}
Given some $\Sigma^t \in \text{PSD}_d(\mathbb{R})$, suppose that there exist weights $w_1 \dots w_N \in \mathbb{R}^{\geq 0}$ and matrices  $\Sigma^1 \dots \Sigma^N \in \text{PSD}_d(\mathbb{R})$ such that $\sum_i w_i = 1$ and $\sum_i w_i \Sigma^i = \Sigma^t$. Then define 
\begin{equation}
    P^i = \frac{1}{\varepsilon}(\Sigma^i - \Sigma^t),
\end{equation}
so  $\Sigma^b = \Sigma^t + \varepsilon P^b$. Define the function
\begin{equation}
    f_a(x) = \sum_b w_b N_b \exp\left(- \frac{1}{2} x^\intercal(\Sigma^t+\varepsilon P^b)^{-1}x\right),
\end{equation}
where $N_b = \left[(2 \pi)^{d/2}\sqrt{|\Sigma^t + \varepsilon P^b|}\right]^{-1}$. Then
\begin{equation}
    \lim_{\varepsilon' \to 0}\left.\partial_\varepsilon f_a(x;\varepsilon)\right|_{\varepsilon'} = 0.
\end{equation}
\end{proposition}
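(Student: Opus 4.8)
The plan is to use that $f_a$ is a \emph{finite} convex combination, $f_a(x;\varepsilon)=\sum_b w_b\, g_b(x;\varepsilon)$ with $g_b(x;\varepsilon)=N_b(\varepsilon)\exp\!\big(-\tfrac12 x^\intercal(\Sigma^t+\varepsilon P^b)^{-1}x\big)$, so that differentiation commutes with the sum with no convergence subtleties: $\partial_\varepsilon f_a=\sum_b w_b\,\partial_\varepsilon g_b$. It therefore suffices to compute $\partial_\varepsilon g_b$ at $\varepsilon=0$ for each $b$ and then sum against the weights. Implicit in the well-definedness of $f_a$ is that $\Sigma^t$ is positive definite, so $\Sigma^t+\varepsilon P^b$ is invertible on a neighborhood of $\varepsilon=0$, and $\varepsilon\mapsto g_b(x;\varepsilon)$ is smooth there.

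First I would Taylor-expand the two $\varepsilon$-dependent pieces of $g_b$ about $\varepsilon=0$, using the standard first-order expansions $(\Sigma^t+\varepsilon P^b)^{-1}=(\Sigma^t)^{-1}-\varepsilon\,(\Sigma^t)^{-1}P^b(\Sigma^t)^{-1}+O(\varepsilon^2)$ and $|\Sigma^t+\varepsilon P^b|^{-1/2}=|\Sigma^t|^{-1/2}\big(1-\tfrac{\varepsilon}{2}\Tr[(\Sigma^t)^{-1}P^b]\big)+O(\varepsilon^2)$. Substituting into $g_b$ and collecting the linear-in-$\varepsilon$ term gives
\begin{equation}
\left.\partial_\varepsilon g_b(x;\varepsilon)\right|_{\varepsilon=0}=f_{0;\Sigma^t}(x)\left(-\tfrac12 \Tr\!\big[(\Sigma^t)^{-1}P^b\big]+\tfrac12\, x^\intercal (\Sigma^t)^{-1}P^b(\Sigma^t)^{-1}x\right).
\end{equation}

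The crucial step is the observation that the hypotheses $\sum_i w_i\Sigma^i=\Sigma^t$ and $\sum_i w_i=1$, together with the definition $P^i=\varepsilon^{-1}(\Sigma^i-\Sigma^t)$, force $\sum_b w_b P^b=\varepsilon^{-1}\big(\sum_b w_b\Sigma^b-\Sigma^t\big)=0$. Since the right-hand side of the displayed equation is linear in $P^b$, summing against the weights annihilates both terms: $\sum_b w_b\left.\partial_\varepsilon g_b\right|_{\varepsilon=0}=0$. Finally, because $\partial_\varepsilon f_a(x;\cdot)$ is continuous at $0$, the quantity $\lim_{\varepsilon'\to 0}\left.\partial_\varepsilon f_a(x;\varepsilon)\right|_{\varepsilon'}$ equals $\left.\partial_\varepsilon f_a(x;\varepsilon)\right|_{\varepsilon=0}=\sum_b w_b\left.\partial_\varepsilon g_b\right|_{\varepsilon=0}=0$, which is the claim.

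The only real obstacle is technical hygiene rather than a conceptual hurdle: one must justify the matrix Taylor expansions with controlled remainders (e.g.\ by applying the integral form of Taylor's theorem to $\varepsilon\mapsto (\Sigma^t+\varepsilon P^b)^{-1}$ and $\varepsilon\mapsto\ln|\Sigma^t+\varepsilon P^b|$ on an interval where $\Sigma^t+\varepsilon P^b$ stays positive definite), and observe that the finite sum and $\partial_\varepsilon$ interchange trivially. I do not expect any genuine difficulty beyond this bookkeeping; the entire content of the proposition is the identity $\sum_b w_b P^b=0$ combined with the linearity of the first-order term.
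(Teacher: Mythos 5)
Your proof is correct and follows essentially the same route as the paper's: differentiate each Gaussian component at $\varepsilon=0$ via the product rule, observe that both the normalization term and the exponential term contribute expressions linear in $P^b$, and kill the sum using the covariance-matching identity $\sum_b w_b P^b=0$. The only (cosmetic) difference is that you expand $(\Sigma^t+\varepsilon P^b)^{-1}$ via the standard resolvent series and the determinant via $|I+\varepsilon A|=1+\varepsilon\,\mathrm{tr}\,A+O(\varepsilon^2)$, whereas the paper uses Jacobi's formula and a first-order eigenbasis expansion of $\Sigma^t+\varepsilon P^b$; your version of the inverse expansion is in fact slightly cleaner since it retains the off-diagonal part of $P^b$ in the eigenbasis of $\Sigma^t$, though this does not affect the conclusion because linearity in $P^b$ is all that is needed.
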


A proof of Prop.~\ref{epsilon-independence-prop} is given in the Supplemental Material. It follows immediately from Prop.~\ref{epsilon-independence-prop} that the Lebesgue $L_1$ norm of the difference between the approximate distribution $f_a$ and the target distribution $f_t$ has vanishing first-order dependence on $\varepsilon$ as $\varepsilon$ goes to zero. However, this is not the case if the Thermies protocol is not used. Figure \ref{fig:LinearQuadratic} shows the $L_\infty$ distance between $f_a$ and $f_t$ with and without Thermies error mitigation. While in the error-mitigated case, the slope vanishes at $\varepsilon = 0$, evidently the slope does not vanish at $\varepsilon = 0$ in the unmitigated case. This numerical result provides evidence that Thermies (or a similar method) is necessary to eliminate sensitivity of sample quality to hardware imprecision.

\bigskip

\textit{Extension to differentiable functions.---}As stated in the previous section, our error mitigation protocol eliminates the first-order dependence of the probability density on $\varepsilon$ as $\varepsilon$ goes to zero. In fact, a more general result can be shown, which states that the first-order dependence on $\varepsilon$ is also eliminated for arbitrary differentiable functions of the covariance matrix.

\begin{proposition}\label{prop2}
    Suppose $g:\text{PSD}_d(\mathbb{R}) \to \mathbb{R}$ is differentiable at $\Sigma = \Sigma^t$, and define $\hat{g} = \sum_i w_i g(\Sigma^i)$. Then $\hat{g} = g(\Sigma^t)$ when $\varepsilon = 0$ and 
    \begin{equation}
         \lim_{\varepsilon'\to 0  }\left.\frac{\partial \hat{g}}{\partial \varepsilon}\right|_{\varepsilon'}
        = 0.
    \end{equation}
\end{proposition}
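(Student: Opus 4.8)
\textbf{Proof proposal for Proposition~\ref{prop2}.}

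The plan is to reduce the statement to a computation about the first derivative of $\hat g$ in $\varepsilon$, and then show that this derivative is proportional to $\sum_i w_i P^i$, which vanishes by the covariance-matching hypothesis. First I would set up notation parallel to Proposition~\ref{epsilon-independence-prop}: write $\Sigma^i = \Sigma^t + \varepsilon P^i$ with $P^i = \tfrac{1}{\varepsilon}(\Sigma^i - \Sigma^t)$, so that the hypothesis $\sum_i w_i \Sigma^i = \Sigma^t$ together with $\sum_i w_i = 1$ gives the key identity $\sum_i w_i P^i = 0$. At $\varepsilon = 0$ every $\Sigma^i$ collapses to $\Sigma^t$, so $\hat g = \sum_i w_i g(\Sigma^t) = g(\Sigma^t)$, which is the first claim.

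For the derivative claim, I would use differentiability of $g$ at $\Sigma^t$ to write the first-order Taylor expansion $g(\Sigma^t + H) = g(\Sigma^t) + \langle \nabla g(\Sigma^t), H\rangle + o(\|H\|)$, where $\nabla g(\Sigma^t)$ is the (symmetric) gradient matrix and $\langle A, B\rangle = \Tr(A^\intercal B)$. Applying this with $H = \varepsilon P^i$ gives
\begin{equation}
\hat g(\varepsilon) = g(\Sigma^t) + \varepsilon \sum_i w_i \langle \nabla g(\Sigma^t), P^i\rangle + \sum_i w_i\, o(\varepsilon \|P^i\|).
\end{equation}
The middle term is $\varepsilon\, \langle \nabla g(\Sigma^t), \sum_i w_i P^i\rangle = 0$ by the key identity, and the remainder is $o(\varepsilon)$ since $N$ and the $P^i$ are fixed. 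Hence $\hat g(\varepsilon) = g(\Sigma^t) + o(\varepsilon)$, which is exactly the statement that $\partial_\varepsilon \hat g$ exists and equals zero in the limit $\varepsilon' \to 0$. I would also remark that Proposition~\ref{epsilon-independence-prop} is recovered as the special case where $g$ is the evaluation functional $\Sigma \mapsto f_{0;\Sigma}(x)$ at a fixed $x$, which is smooth in $\Sigma$ on $\text{PSD}_d(\mathbb{R})$.

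The main subtlety — rather than a genuine obstacle — is handling the regime of small but nonzero $\varepsilon$ carefully: one must ensure each $\Sigma^i$ stays in $\text{PSD}_d(\mathbb{R})$ (the domain of $g$) so that $g(\Sigma^i)$ is defined, and that the $o(\|H\|)$ term from differentiability can be applied simultaneously to all $N$ perturbations with a uniform bound. Since $\Sigma^t$ is held fixed and the $P^i$ are fixed matrices, for $\varepsilon$ small enough $\Sigma^t + \varepsilon P^i$ lies in any prescribed neighborhood of $\Sigma^t$; if $\Sigma^t$ is positive definite this neighborhood can be taken inside $\text{PSD}_d(\mathbb{R})$, and if $\Sigma^t$ is only semidefinite one either restricts to the relevant face or assumes $g$ extends differentiably to an open neighborhood. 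The uniformity of the little-$o$ term is immediate because there are only finitely many $P^i$. No heavier machinery (no explicit Jacobi-type formula for $|\Sigma^t + \varepsilon P^b|^{1/2}$ or its inverse, as would be needed for a direct proof of Proposition~\ref{epsilon-independence-prop}) is required here, which is precisely why stating the general differentiable-$g$ version is cleaner than the Gaussian-specific one.
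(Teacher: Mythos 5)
Your proof is correct and follows essentially the same route as the paper's: both reduce $\partial_\varepsilon \hat g$ at $\varepsilon = 0$ to the gradient of $g$ at the common point $\Sigma^t$ paired against $\sum_i w_i P^i$, which vanishes by the covariance-matching identity. The only differences are cosmetic — the paper applies the chain rule at general $\varepsilon$ and then sets $\varepsilon = 0$, whereas you work directly with the first-order Taylor remainder, which uses the hypothesis of differentiability \emph{only at} $\Sigma^t$ slightly more faithfully, and your remark about keeping each $\Sigma^i$ inside $\text{PSD}_d(\mathbb{R})$ addresses a point the paper glosses over.
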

\begin{proof}
To see this explicitly, we compute
\begin{align}
    \frac{\partial \hat{g}}{\partial \varepsilon}
    &= \frac{\partial}{\partial \varepsilon} \sum_i w_i g\left(\Sigma^i \right)\\
    &= \sum_{ijk}  w_i\left. \frac{\partial g(\Sigma)}{\partial \Sigma_{jk}}\right|_{\Sigma^i}\frac{\partial \Sigma^i_{jk}}{\partial \varepsilon}
\end{align}
When $\varepsilon = 0$, $\Sigma^i = \Sigma^t$ for all $i$, so
\begin{align}
   \left. \frac{\partial \hat{g}}{\partial \varepsilon}\right|_{\varepsilon = 0}
    &= \sum_{ijk}  w_i\left. \frac{\partial g(\Sigma)}{\partial \Sigma_{jk}}\right|_{\Sigma^t}\frac{\partial \Sigma^i_{jk}}{\partial \varepsilon} \\
    &= \sum_{jk} \left. \frac{\partial g(\Sigma)}{\partial \Sigma_{jk}}\right|_{\Sigma^t}\frac{\partial }{\partial \varepsilon}\left( \sum_{i}w_i \Sigma^i_{jk}\right) \\
    &=0,
\end{align}
where the last line is achieved using the fact that $\sum_i w_i \Sigma^i = \Sigma^t$, by definition. Clearly when $\varepsilon=0$ we have $\hat{g} = g(\Sigma^t)$.
\end{proof}

The implication of this simple calculation is that $\hat{g}$ may be used as an estimate for any differentiable function $g$. Note that $g$ does not have to be differentiable everywhere, it is sufficient that $g$ is differentiable at $\Sigma^t$. The Thermies method can therefore be used to eliminate first-order dependence on imprecision in a variety of thermodynamic algorithms; these include, for example, solving a linear system of equations, inverting a matrix, solving the Lyapunov equation, estimating of the determinant of a matrix, and exponentiating a matrix~\cite{aifer2023thermodynamic,duffield2023thermodynamic}. Also note that Prop.~\ref{epsilon-independence-prop} is implied as a special case of Prop.~\ref{prop2} (hence giving an alternative proof of Prop.~\ref{epsilon-independence-prop}), as the probability density function is differentiable with respect to elements of the covariance matrix.

\begin{figure}
    \centering
    \includegraphics[width=0.48\textwidth]{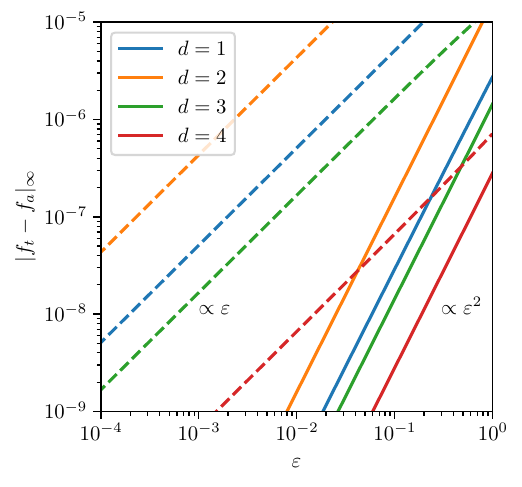}
    \caption{\textbf{Imprecision Dependence}. The $L_\infty$ distance between the approximate distribution $f_a$ and the target $f_t$ is plotted versus $\varepsilon$ for dimensions $1$ to $4$ with error mitigation (solid lines) and without error mitigation (dashed lines). With error mitigation, the error is proportional to $\varepsilon^2$ near $\varepsilon = 0$, so the first order dependence is zero as $\varepsilon$ goes to zero. Without error mitigation the error is proportional to $\varepsilon$ for small $\varepsilon$, so the error is sensitive to imprecision.}
    \label{fig:LinearQuadratic}
\end{figure}

\begin{figure}[t]
    \centering
    \includegraphics[width=0.48\textwidth]{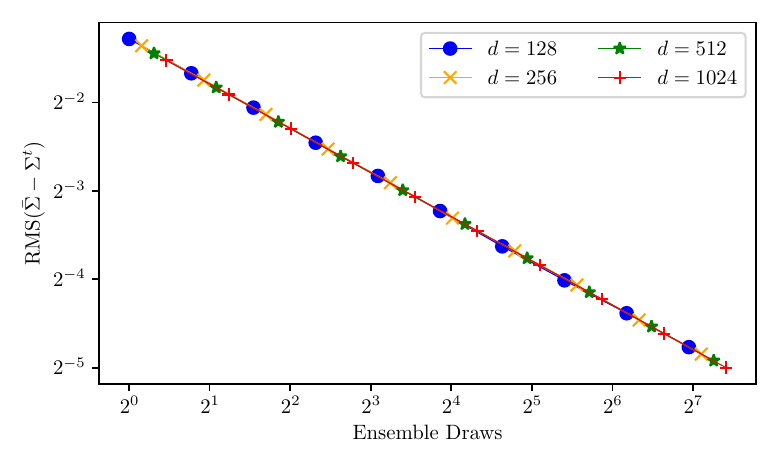}
    \caption{\textbf{Sampling Complexity.} The Root Mean Square (RMS) error between $\Sigma^t$ and the average of $M$ covariance matrices drawn from the nearest neighbor ensemble is plotted versus $M$. The RMS error shows a dependence of roughly $M^{-1/2}$. This behavior is independent of $d$.}
    \label{fig:ConvergenceAvgCov}
\end{figure}

\bigskip
\textit{Sampling Complexity.---}The Thermies protocol described above has the advantage of eliminating the first-order dependence of the approximate distribution  on the imprecision parameter, and moreover allows for perfectly matching the covariance matrix of the target distribution. However, the method has the disadvantage that one must draw a covariance matrix from the ensemble of nearest neighbors for each new sample which is required. Naturally, the question arises of whether high-quality samples can be obtained with fewer draws from the nearest neighbor ensemble. To address this, we construct the Thermies protocol with repetition (see Supplemental Material for elaboration), which requires fewer draws from the ensemble of nearest neighbors, and analyze its performance. The only difference between the Thermies method given before and this new one is that now $n$ samples are drawn from the normal distribution $\mathcal{N}[0,\Sigma^B]$ instead of only $1$. If the Thermies protocol with repetition is carried out $M$ times consecutively, then we end up with a sequence of $N = Mn$ realizations of $X$. Given that the nearest neighbor ensemble may be very large (in fact comprising $2^{(d^2 + d)/2}$ elements), one might be led to wonder whether a relatively very small selection of $M$ elements can provide a representative sample of this ensemble. In fact it can, which simply follows from Hoeffding's inequality, as we show in the Supplemental Material. In particular, let $\bar{\Sigma}$ be the sample mean of a sequence of covariance matrices $\Sigma^1 \dots \Sigma^M$ drawn from the nearest-neighbor ensemble of a target covariance matrix $\Sigma^t$:
\begin{equation}
\label{sigma-bar-def}
\bar{\Sigma} = \frac{1}{M}\sum_{m=1}^M \Sigma^m.
\end{equation}
Choosing some $\delta >0$, Hoeffding's inequality says \cite{hoeffding1948class, hoeffding1963probability}
\begin{equation}
\label{hoeffding-bound}
    \text{Pr}\left( |\bar{\Sigma}_{ij} - \Sigma^t_{ij}| \geq \delta \right) \leq 2 \exp\left(-2 \frac{M \delta^2}{ \varepsilon^2} \right).
\end{equation}

If we hold the left side of \eqref{hoeffding-bound} constant and vary $M$ (while also holding $\varepsilon$ constant), we see that $\delta \propto M^{-1/2}$, which implies that the elementwise difference between $\bar{\Sigma}$ and $\Sigma^t$ goes with $M^{-1/2}$. Importantly, as there are no factors in \eqref{hoeffding-bound} which involve $d$, this behavior is independent of dimension. In fact this is exactly what we see when random realizations of the nearest neighbor ensemble are realized and $\bar{\Sigma}$ is evaluated, as Fig. \ref{fig:ConvergenceAvgCov} shows. We also see that a relatively small number of samples from the nearest neighbor ensemble is needed to get a good estimate of $\Sigma^t$.

\bigskip

\textit{Feasibility for Ill-conditioned Matrices.---}In the Supplemental Material, we delve into the question of how the condition number $\kappa$ of the target covariance matrix impacts the feasibility of applying the Thermies method. Namely, we derive a fundamental relation between $\kappa$, the dimension $d$, and the imprecision $\varepsilon$, and this relation provides the region of ``space'' (space here refers to the coordinates ($\kappa$, $d$, $\varepsilon$)) in which Thermies can be successfully applied. Qualitatively, coordinates ($\kappa$, $d$, $\varepsilon$) that are closer to the origin are more amenable to the feasibility of Thermies, although we provide the precise quantitative relationship in the Supplemental Material.

\begin{figure}[t]
    \centering    \includegraphics[width=\linewidth, height=0.8\linewidth]{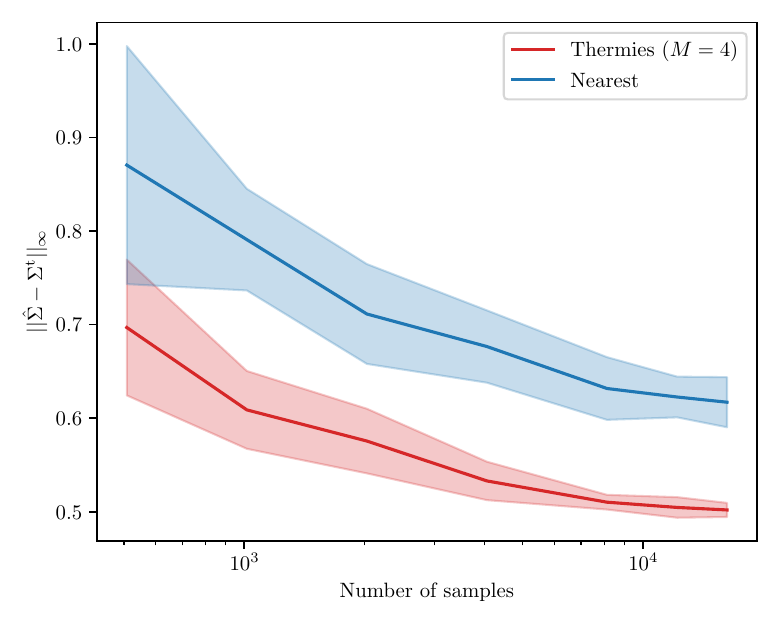}
\caption{\textbf{Implementation.} We show the results for implementing Thermies with 4 ensemble draws on a thermodynamic computer for the task of 8x8 matrix inversion. The solid lines represent the mean of the results from 10 repetitions while the shaded regions represent the standard deviation. The final error is reduced by $\sim20\%$ with Thermies (red curve) relative to the case of no error mitigation (blue curve).}
\label{fig:PCB_Implementation}
\end{figure}

\bigskip

\textit{Implementation.---}As a proof of concept, we implemented Thermies on a real thermodynamic computer, namely, the computer presented in Ref.~\cite{melanson2023thermodynamic}, which is composed of 8 fully connected unit cells on a printed circuit board. The results are shown in Fig.~\ref{fig:PCB_Implementation}. Here we employ Gaussian sampling as a subroutine for matrix inversion (see Ref.~\cite{aifer2023thermodynamic} for the thermodynamic matrix inversion algorithm). Hence, the task shown in Fig.~\ref{fig:PCB_Implementation} is inverting an 8x8 matrix. The matrix inversion error naturally goes down with the number of samples gathered. However, the final error is significantly less ($\sim20\%$ error reduction) for the case where Thermies is applied (red curve), versus the case with no error mitigation (blue curve). We emphasize that the device from Ref.~\cite{melanson2023thermodynamic} is relatively low precision, and even better performance for Thermies is expected in the high-precision limit. More details on this implementation are presented in the Supplemental Material.

\bigskip

\textit{Conclusion.---}We have presented the first error mitigation method for Thermodynamic Computing. We targeted the dominant source of error in the paradigm: imprecision of hardware components. Without error mitigation, the distribution error grows linearly with imprecision $\varepsilon$, whereas the distribution is unaffected (i.e., has zero derivative) in the small $\varepsilon$ limit when employing our method. The convergence of the sample-average covariance matrix $\bar{\Sigma}$ is dimension independent, demonstrating the potential for scalability to large dimensions. A natural future direction will be to extend this method to non-Gaussian sampling.

\bibliography{thermo.bib}

\newpage
\pagebreak
\onecolumngrid

\begin{appendix}

\begin{center}
\Large Supplemental Material for\\ ``Error Mitigation for Thermodynamic Computing''
\end{center}

\section{Overview}

In this Supplemental Material, we discuss and elaborate on the following topics: 

\begin{itemize}
    \item Analysis of the Multivarate Thermies Method
    \item Details on the Two-dimensional Example
    \item Proof of Proposition~\ref{epsilon-independence-prop}
    \item Thermies Protocol with Repetition
    \item Convergence of Thermies method via Hoeffding's bound
    \item Implementation on thermodynamic hardware
\end{itemize}

\section{Analysis of the Multivarate Thermies Method}

Here we give a more detailed analysis of the multivariate Thermies method. It is helpful to first introduce the following notation for various sets of matrices:
\begin{itemize}
    \item $\text{Sym}_d(\mathbb{R})$  is the set of real symmetric $d \times d$ matrices. We denote by $\text{Sym}_d(\mathbb{Z})$ and $\text{Sym}_d(\{0,1\})$ respectively the sets of symmetric matrices with integer and binary entries.

    \item A matrix $A \in \text{Sym}_d(\mathbb{R})$ may be ``vectorized'', or treated as a vector $\vec{A} \in \mathbb{R}^{(d^2 + d)/2}$. The vector is formed by appending the rows of the upper triangle of $A$ together in order, i.e.
    \begin{equation}
        \vec{A} = (A_{11}, A_{12}, \dots A_{1d}, A_{22}, A_{23},\dots A_{2d}, \dots)^\intercal.
    \end{equation}
    As $\text{Sym}_d(\{0,1\}) \subset \text{Sym}_d(\mathbb{Z})\subset \text{Sym}_d(\mathbb{R})$, the same notation may be used for integer and binary symmetric matrices.

    \item $\text{PSD}_d(\mathbb{R})$ is the set of real positive semi-definite $d \times d$ matrices, defined as \cite{bellman_introduction_1997}

\begin{equation}
\label{PSD-def}
\text{PSD}_d(\mathbb{R}) = \{A \in \text{Sym}_d(\mathbb{R}) : x^\intercal A x  \geq 0 \forall x \in \mathbb{R}^d \setminus 0\},
\end{equation}
and similarly, $\text{PSD}_d(\mathbb{Z}) =\text{PSD}_d(\mathbb{R}) \cap \mathbb{Z}^{d \times d}$.
\end{itemize}

A $d$-dimensional multivariate normal distribution is described by a mean vector $\mu \in \mathbb{R}^d$ and a covariance matrix $\Sigma \in \text{PSD}_d(\mathbb{R})$. If $X$ is a normally distributed random vector we write $X \sim \mathcal{N}[\mu, \Sigma]$, and denote the associated probability density function by $f_{\mu;\Sigma}$, which is given by \cite{kac_characterization_1939}
\begin{equation}
    \label{multivariate-normal-pdf_SM}
    f_{\mu;\Sigma}(x) = \frac{1}{(2\pi)^{d/2}\left|  \Sigma \right|^{1/2}} \exp \left( - \frac{1}{2} (x - \mu)^\intercal  \Sigma^{-1} (x-\mu) \right).
\end{equation}

We now suppose that there is a device (i.e., a thermodynamic computer) which can sample a $d$-dimensional zero-mean multivariate normal distribution whose covariance matrix has elements that are multiples of the imprecision parameter $\varepsilon$. That is, we are able to sample any distribution $\mathcal{N}[0, \varepsilon \tilde{\Sigma}]$ with
\begin{equation}
\label{ND-precision-constraint_App}
    \tilde{\Sigma} \in \text{PSD}_d(\mathbb{Z}).
\end{equation}
Equation \eqref{ND-precision-constraint_App} can be viewed as the multidimensional generalization of the constraint given in Eq. \eqref{1D-approx-pdf}. The goal of the multivariate Thermies method is to match the covariance matrix of a target normal distribution $\mathcal{N}[0, \Sigma^t]$ using an ensemble of realizable distributions $\mathcal{N}[0, \varepsilon \tilde{\Sigma}]$ where $\tilde{\Sigma} \in \text{PSD}_d(\mathbb{Z})$.

In the main text, we presented the multivariate Thermies protocol. Recall from the main text that this protocol requires many Bernoulli random variables to be sampled for each error mitigated sample of the vector $X$. Let $D=(d^2 + d)/2$ be the number of Bernoulli random variables drawn. If we vectorize the matrix $B$, then $\vec{B}$ belongs to the $D$-dimensional hypercube's vertex set $\{0,1\}^{D}$. There are therefore $2^D$ possible outcomes of the set of Bernoulli trials, resulting in $2^D$ possible covariance matrices $\Sigma^B$. The resulting sample of $X$ which is obtained can then be seen as arising from a Gaussian mixture,
\begin{equation}
\label{ND-approx-pdf_SM}
    f_a = \sum_{\vec{b} \in \{0,1\}^D} w_bf_{0;\Sigma^{b}},
\end{equation}
where the weights $w_b$ are joint probabilities, $w_b = \text{Pr}(B = b)$. As all of the components of the mixture are mean zero, $\braket{X}=0$, and we again see that the covariance of the mixture is given by the corresponding convex combination of the components' covariances,
\begin{align}
    \Sigma^a_{ij} = \sum_{\vec{b} \in \{0,1\}^D} w_b\Sigma^b_{ij}.
\end{align}
We may then write $\Sigma^a = \braket{\Sigma^B}$, where the expected value is understood as being taken with respect to the weights $w_b$. 


Continuing the analysis of the Thermies protocol, we notice that each matrix element $\Sigma^B_{ij}$ depends only on a single Bernoulli random variable $B_{ij}$, and therefore the elements of $\Sigma^B_{ij}$ are independent random variables, whose probability distributions are
\begin{align}
    &\text{Pr}\left(\Sigma^B_{ij} = 
    \varepsilon\floor{\Sigma^t_{ij}/\varepsilon}\right) = 1-R_{ij} 
    , \\
    &\text{Pr}\left(\Sigma^B_{ij} = 
    \varepsilon \floor{\Sigma^t_{ij}/\varepsilon} + \varepsilon\right) = R_{ij}.
\end{align}
The expected value of each matrix element is
\begin{align}
\braket{\Sigma^B_{ij}} &= \text{Pr}\left(\Sigma^B_{ij} = 
    \varepsilon \floor{\Sigma^t_{ij}/\varepsilon}\right)\left(\varepsilon \floor{\Sigma^t_{ij}/\varepsilon}\right) + 
 \text{Pr}\left(\Sigma^B_{ij} = 
    \varepsilon \floor{\Sigma^t_{ij}/\varepsilon} + \varepsilon\right)\left(\varepsilon  \floor{\Sigma^t_{ij}/\varepsilon} + \varepsilon \right) \\
    & = (1-R_{ij})\left(\varepsilon \floor{\Sigma^t_{ij}/\varepsilon}\right) + R_{ij}\left(\varepsilon  \floor{\Sigma^t_{ij}/\varepsilon} + \varepsilon \right) \\
    &= (1+\floor{\Sigma^t_{ij}/\varepsilon} - \Sigma^t_{ij}/\varepsilon)\left(\varepsilon \floor{\Sigma^t_{ij}/\varepsilon}\right) + \left( \Sigma^t_{ij}/\varepsilon - \floor{\Sigma^t_{ij}/\varepsilon}\right)\left(\varepsilon  \floor{\Sigma^t_{ij}/\varepsilon} + \varepsilon \right) \\
    & = \Sigma^t_{ij}.
\end{align}
Therefore the approximate distribution satisfies a covariance-matching condition,
\begin{equation}
\label{ND-covariance-matching_SM}
    \Sigma^a = \Sigma^t.
\end{equation}

We may also write the average $\braket{\Sigma^B}$ by summing over the joint distribution for all elements of $B$ simultaneously. Each weight $w_b$ can be expressed as product over matrix elements as
\begin{equation}
\label{interpolation-weight-formula}
    w_b= \prod_{i=1}^d \prod_{j\geq i}^d  \begin{cases} 1-R_{ij} & b_{ij} = 0 \\ 
    R_{ij} & b_{ij} = 1\end{cases},
\end{equation}
or in vectorized notation as
\begin{equation}
\label{interpolation-weight-formula-vec}
    w_b= \prod_{i=1}^D  \begin{cases} 1-\vec{R}_i & \vec{b}_i = 0 \\ 
    \vec{R}_i & \vec{b}_i = 1\end{cases}.
\end{equation}
As the weights are probabilities, they satisfy $w_b \geq 0$ and $\sum_b w_b=1$, meaning $\Sigma^a$ is a convex combination of the matrices $\Sigma^b$ for $\vec{b} \in \{0,1\}^D$. Note that because $\Sigma^B$ is related to $B$ by a scaling and translation, the vectors $\vec{\Sigma}^b$ comprise the vertex set of a hypercube as well. The convex hull of the vertex set of a hypercube is the hypercube itself \cite{klee_what_1971}, meaning that for the appropriate choice of weights $w_b$, any point may be obtained within a $D$-dimensional hypercube
\begin{equation}
    \sum_{\vec{b}\in\{0,1\}^D} w_b \Sigma^b \in \prod_{i=1}^d \prod_{j\geq i}^d  [\floor{\Sigma^t_{ij}/\varepsilon}, \floor{\Sigma^t_{ij}/\varepsilon}+\varepsilon].
\end{equation}
The particular choice of Eq. \eqref{interpolation-weight-formula} corresponds to the well known multilinear interpolation formula \cite{zhang_fast_2021}, and results in the covariance matching condition \eqref{ND-covariance-matching_SM}. With this picture in mind, we refer to the matrices $\Sigma^b$ as the nearest neighbors of the target covariance matrix $\Sigma^t$, and similarly the associated normal distributions $f_{0;\Sigma^b}$ are nearest neighbors of the target distribution $f_{0;\Sigma^t}$.

\subsection{Details on the Two-dimensional Example}

For additional clarity, we include an example of the Thermies method for the case $d=2$, as illustrated in Fig.~\ref{fig:2d_example_hypercube}. Specifically, we describe the use of the protocol elaborated in the previous section to sample the normal distribution $\mathcal{N}[0,\Sigma^t]$, with covariance matrix
\begin{equation}
    \label{2D-example-sigma}
    \Sigma^t = \begin{pmatrix}3.6 & 1.3 \\ 1.3 & 3.5\end{pmatrix}.
\end{equation}
We set $\varepsilon = 1$, so
\begin{equation}
    \varepsilon\floor{\Sigma/\varepsilon} = \begin{pmatrix}3 & 1 \\ 1 & 3\end{pmatrix}.
\end{equation}

Using the vectorized notation, we write the nearest neighbor covariance matrices $\Sigma^{\vec{b}}$ as
\begin{align}
\label{2D-neighbors}
    \Sigma^{000} &= \begin{pmatrix}3 & 1 \\ 1 & 3\end{pmatrix}, \: \: 
    \Sigma^{001} = \begin{pmatrix}3 & 1 \\ 1 & 4\end{pmatrix},\: \: 
    \Sigma^{010} = \begin{pmatrix}3 & 2 \\ 2 & 3\end{pmatrix}, \: \: 
    \Sigma^{011} = \begin{pmatrix}3 & 2 \\ 2 & 4\end{pmatrix},\\
    \Sigma^{100} &= \begin{pmatrix}4 & 1 \\ 1 & 3\end{pmatrix}, \: \: 
    \Sigma^{101} = \begin{pmatrix}4 & 1 \\ 1 & 4\end{pmatrix},\: \: 
    \Sigma^{110} = \begin{pmatrix}4 & 2 \\ 2 & 3\end{pmatrix}, \: \: 
    \Sigma^{111} = \begin{pmatrix}4 & 2 \\ 2 & 4\end{pmatrix}.
\end{align}
These matrices are portrayed graphically as the vertices of a hypercube, which is a cube in this case, in Figure~\ref{fig:hypercube_a}. Evaluating the weights using Eq. \eqref{interpolation-weight-formula-vec}, we find
\begin{align}
\label{2D-weights}
    w_{000} &= 0.14, \: \: w_{001} = 0.14, \: \: w_{010} = 0.06, \: \: w_{011} = 0.06\\
    w_{100} &= 0.21, \: \: w_{101} = 0.21, \: \: w_{110} = 0.09, \: \: w_{111} = 0.09.
\end{align}
It is easily verified that the covariance matching condition is satisfied, that is
\begin{equation}
    \sum_{\vec{b}\in\{0,1\}^3} w_{\vec{b}}\Sigma^{\vec{b}} = \Sigma^t.
\end{equation}

\section{Feasibility for Ill-conditioned Matrices}

\begin{figure}[t]
    \centering
    \includegraphics[width=0.5\textwidth]{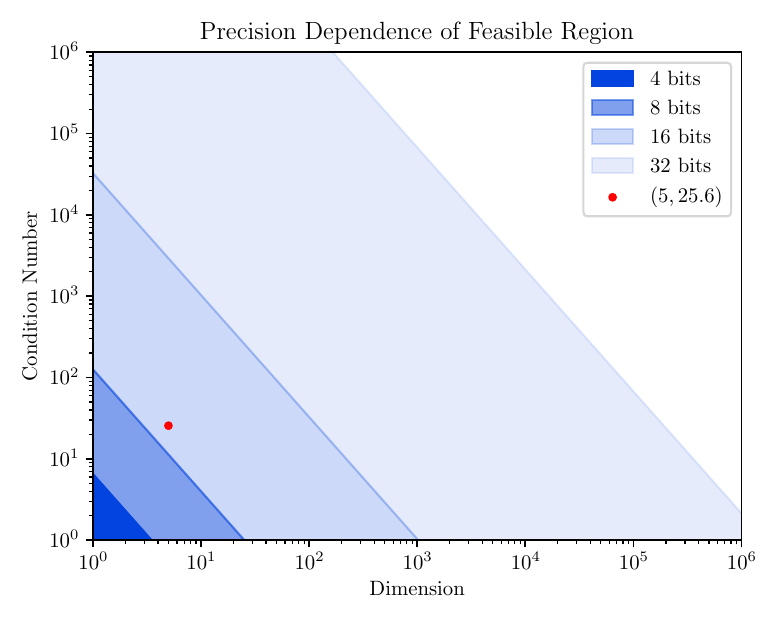}
    \caption{The feasible region for Thermies error mitigation for bit depth of $8$, $16$, and $32$. At fixed precision, there is a tradeoff between dimension and condition number of the target covariance matrix. The point $(5, 25.6)$ is plotted for reference, which corresponds to the matrix given in Eq. \eqref{finance-example-covariance-matrix}.}
    \label{fig:FeasibleRegion}
\end{figure}

Suppose the target covariance matrix $\Sigma^t$ has condition number at most $\kappa$, meaning \cite{trefethen1997numerical}
\begin{equation}
    \left| \lambda_\text{max}\right| \leq \kappa \left|\lambda_\text{min} \right|,
\end{equation}
where $\lambda_\text{max}$ is the eigenvalue having the largest absolute value and $\lambda_\text{min}$ is the eigenvalue having the smallest absolute value. Note that in dividing by $|\lambda_\text{min}|$, we are implicitly assuming $\Sigma^t$ is non-singular. The target covariance matrix may be scaled freely, so we can assume without loss of generality that 
\begin{equation}
   | \lambda_\text{min} | = d \varepsilon,
\end{equation}
which means $|\lambda_\text{max}| \leq \kappa d \varepsilon$. Suppose $\Sigma^b$ belongs to the set of nearest neighbors of $\Sigma^t$. Then any element $\Sigma^b_{ij}$ differs from $\Sigma^t_{ij}$ by at most $\varepsilon$. Let $\delta = \Sigma^b - \Sigma^t$. We then have
\begin{align}
    x^\intercal\Sigma^bx &= x^\intercal(\Sigma^t+ \delta)x  \\
    &=  x^\intercal\Sigma^t x+ x^\intercal\delta x \\
    & \geq x^\intercal \Sigma x -  \| \delta \|_\infty \\
    & \geq x^\intercal \Sigma x - d\varepsilon \geq 0.
\end{align}
This shows that all nearest neighbors $\Sigma^b$ of $\Sigma^t$ are positive semi-definite. 
 Recall the equivalence of matrix norms,
\begin{equation}
  \| \Sigma^t\|_\infty \leq \sqrt{d} \| \Sigma^t\|_2 = \sqrt{d} \lambda_\text{max},  
\end{equation}
where we have used the fact that $\Sigma^t\in \text{PSD}_d(\mathbb{R})$. This gives an upper bound on the magnitudes of all elements of the matrix,
\begin{equation}
    \left| \Sigma_{ij}^t\right| \leq \sqrt{d} \lambda_\text{max}.
\end{equation}
So, by the above reasoning,
\begin{equation}
\left|\Sigma_{ij}^t\right| \leq \kappa d^{3/2} \varepsilon.
\end{equation}
It follows that the nearest neighbors of $\Sigma^t$ have elements whose absolute values are at most $\kappa d^{3/2} \varepsilon + \varepsilon$. It follows that the bit depth $\xi$ of the (signed) covariances must be such that
\begin{equation}
    2^{\xi-1} - 1 \geq \kappa d^{3/2} +1,
 \end{equation}
so
\begin{equation}
\label{bit-depth-constraint}
\xi =\lceil \log_2(2\kappa d ^{3/2}+ 4)\rceil.
\end{equation}
If we take $\varepsilon = 1$, we can ensure that the neighbors are positive semi-definite by multiplying $\Sigma^t$ by a constant such that its largest element is $\kappa d^{3/2}$. An example of a covariance matrix describing the price fluctuations in a collection of financial instruments is taken from Ref.~\cite{chen_empirical_2015}, which reads
\begin{equation}
\label{finance-example-covariance-matrix}
\Sigma = 
\begin{pmatrix}
 1 & 0.652132 & 0.785365 & 0.608046 & 0.77665 \\
 0.652132 & 1 & 0.768745 & 0.544411 & 0.793887 \\
 0.785365 & 0.768745 & 1 & 0.640743 & 0.847228 \\
 0.608046 & 0.544411 & 0.640743 & 1 & 0.616502 \\
 0.77665 & 0.793887 & 0.847227 & 0.616502 & 1 \\
\end{pmatrix}.
\end{equation}
The matrix in Eq. \eqref{finance-example-covariance-matrix} has dimension $d=5$ and condition number $\kappa = 25.6$. The constraint of Eq. \eqref{bit-depth-constraint} is shown graphically in Fig. \ref{fig:FeasibleRegion}, and the point $(5, 25.6)$ is plotted for reference.

\section{Proof of Proposition~\ref{epsilon-independence-prop}}
\label{app:epsilon-independence}

Recall that the the covariance matching constraint is
\begin{equation}
 \Sigma^t = \sum_b w_b \Sigma^b.
\end{equation}
We define the perturbation matrices
\begin{equation}
    P^b = \frac{1}{\varepsilon}(\Sigma^b - \Sigma^t),
\end{equation}
and consequently $\Sigma^b = \Sigma^t + \varepsilon P^b$. In this form, the covariance-matching constraint is written
\begin{equation}
    \label{covariance-matching-perturbation}
    \sum_b w_b P^b = 0
\end{equation}

\begin{equation}
    f_a(x) = \sum_b w_b N_b \exp\left(- \frac{1}{2} x^\intercal(\Sigma^t+\varepsilon P^b)^{-1}x\right),
\end{equation}
where the normalization factor $N_b$ is
\begin{equation}
    N_b = \frac{1}{(2 \pi)^{d/2}\sqrt{|\Sigma^t + \varepsilon P^b|}}.
\end{equation}
In evaluating the derivative $\partial_\varepsilon f_a(x)$, there are terms which arise due to the normalization factors $N_b$ as well as terms which arise due to the exponential,
\begin{equation}
    \partial_\varepsilon f_a(x) = \sum_b w_b (\partial_\varepsilon N_b)\exp(\dots) + \sum_b w_b  N_b\partial_\varepsilon\exp(\dots).
\end{equation}
When we evaluate the derivative at $\varepsilon=0$, the factor which is not differentiated in each sum does not depend on $b$ so it can be pulled out of the sum, giving
\begin{equation}
\label{first-order-dependence-split}
    \left.\partial_\varepsilon f_a(x)\right|_{\varepsilon=0} = \exp(\dots)\sum_b w_b\left.\partial_\varepsilon N_b\right|_{\varepsilon=0}  + N\sum_b w_b\left.\partial_\varepsilon \exp(\dots)\right|_{\varepsilon=0} .
\end{equation}
We first evaluate the normalization terms,
\begin{align}
    \partial_\varepsilon N_b(\varepsilon) &= \partial_\varepsilon \frac{1}{(2\pi)^{d/2}|\Sigma^t + \varepsilon P^b|} \\
    & = -\frac{1}{2(2\pi)^{d/2}|\Sigma^t + \varepsilon P^b|^{3/2}}\partial_\varepsilon |\Sigma^t + \varepsilon P^b|\\
    & =  -\frac{1}{2(2\pi)^{d/2}|\Sigma^t + \varepsilon P^b|^{3/2}}|\Sigma^t + \varepsilon P^b|\:  \text{tr}\{(\Sigma^t + \varepsilon P^b)^{-1})P^b\},
\end{align}
where we have used Jacobi's formula. Evaluating at zero, we have
\begin{equation}
    \left. \partial_\varepsilon N_b(\varepsilon)\right|_{\varepsilon=0} = -\frac{1}{2(2\pi)^{d/2}|\Sigma^t|^{1/2}} \text{tr}\{\left(\Sigma^t\right)^{-1}P^b\}
\end{equation}
Therefore the first sum in Eq. \eqref{first-order-dependence-split} is
\begin{align}
    \sum_b w_b \left.\partial_\varepsilon N_b\right|_{\varepsilon=0} &=  - \frac{1}{2 (2\pi)^{d/2}|\Sigma^t|^{1/2}}\sum_b w_b \text{tr}\{(\Sigma^t)^{-1}P^b\} \\
    &= - \frac{1}{2 (2\pi)^{d/2}|\Sigma^t|^{1/2}} \text{tr}\left\{(\Sigma^t)^{-1}\sum_b w_b P^b\right\} \\
    & = 0,
\end{align}
where the last line follows from the covariance matching constraint \eqref{covariance-matching-perturbation}. Next we consider the derivative of the exponential factor. First note that the covariance matrix $\Sigma^t$ can be expanded in its eigenbasis as
\begin{equation}
    \Sigma^t = \sum_j \lambda_j v_jv_j^\intercal.
\end{equation}
We then have, to first order in $\varepsilon$,
\begin{align}
\exp\left(- \frac{1}{2} x^\intercal(\Sigma^t+\varepsilon P^b)^{-1}x\right) &= \exp\left(- \frac{1}{2} x^\intercal\left(\sum_j \lambda_j v_j v_j^\intercal+\varepsilon P^b\right)^{-1}x\right) \\
&=\exp\left(- \frac{1}{2} x^\intercal\left(\sum_j (\lambda_j +\varepsilon v_j^\intercal P^b v_j)v_j v_j^\intercal\right)^{-1}x\right) \\
    &=\exp\left(- \frac{1}{2} x^\intercal\left(\sum_j \frac{1}{\lambda_j +\varepsilon v_j^\intercal P^b v_j}v_j v_j^\intercal\right)x\right) \\
    &=\exp\left(- \frac{1}{2} x^\intercal\left(\sum_j \left[\frac{1}{\lambda_j} -\frac{\varepsilon v_j^\intercal P^b v_j}{(\lambda_j + \varepsilon v_j^\intercal P^b v_j)^2}\right]v_j v_j^\intercal\right)x\right) 
\end{align}
Therefore the derivative with respect to $\varepsilon$ is
\begin{align}
\partial_\varepsilon \exp\left(- \frac{1}{2} x^\intercal(\Sigma^t+\varepsilon P^b)^{-1}x\right) &=\exp(\dots)\left(\frac{1}{2} x^\intercal\left(\sum_j \frac{ v_j^\intercal P^b v_j}{(\lambda_j + \varepsilon v_j^\intercal P^b v_j)^2}v_j v_j^\intercal\right)x\right).
\end{align}
We can now compute the second sum in Eq. \eqref{first-order-dependence-split}, which is

\begin{align}
    \sum_b w_b \left.\partial_\varepsilon \exp (\dots)\right|_{\varepsilon=0} &= \sum_b w_b \exp(\dots) \left(\frac{1}{2} x^\intercal\left(\sum_j \frac{ v_j^\intercal P^b v_j}{\lambda_j^2}v_j v_j^\intercal\right)x\right) \\
    &=  \exp(\dots)  \left(\frac{1}{2} x^\intercal\left(\sum_j \frac{ v_j^\intercal \left(\sum_b w_b P^b \right)v_j}{\lambda_j^2}v_j v_j^\intercal\right)x\right) \\
    &= 0,
\end{align}
where the last line again follows from the covariance matching constraint \eqref{covariance-matching-perturbation}. This completes the proof that the covariance matching constraint is sufficient for the elimination of the first order dependence of $f_a(x)$ on $\varepsilon$ for $\varepsilon \to 0$.

\section{Thermies Protocol with Repetition}

The Thermies protocol described in the main text has the advantage of eliminating the first-order dependence of the approximate distribution  on the imprecision parameter, and moreover allows for perfectly matching the covariance matrix of the target distribution. However, the method has the disadvantage that one must draw a covariance matrix from the ensemble of nearest neighbors for each new sample which is required. Naturally, the question arises of whether high-quality samples can be obtained with fewer draws from the nearest neighbor ensemble. Here we present a version of the Thermies protocol which requires fewer samples from the nearest neighbor distribution, and analyze its performance. We call it the Thermies Protocol with Repetition. 

\medskip
\begin{tcolorbox}[title={Thermies Protocol with Repetition}]
\begin{enumerate}
    \item Compute the residual matrix $R = \Sigma^t/\varepsilon -   \floor{\Sigma^t/\varepsilon}$.
    \item For each pair $(i, j) \in \{1,2\dots d\}^2$ with $i \leq j$, draw a realization of the Bernoulli random variable $B_{ij}$ which has probabilities $\text{Pr}(B_{ij}=0) = 1-R_{ij}$ and $\text{Pr}(B_{ij}=1) = R_{ij}$. Also define $B_{ji} = B_{ij}$, resulting in a matrix of realizations $B \in \{0,1\}^{d \times d}$.
    \item Construct the matrix $\Sigma^B = \varepsilon (\floor{\Sigma^t/\varepsilon} +  B)$, and draw $n$ samples from the normal distribution $\mathcal{N}[0, \Sigma^B]$.
    \item Record the resulting samples as a realizations of the random vector $X$, without storing the results of the Bernoulli trials.
\end{enumerate}
\end{tcolorbox}

\medskip
The only difference between the Thermies method given before and this new one is that now $n$ samples are drawn from the normal distribution $\mathcal{N}[0,\Sigma^B]$ instead of only $1$. If the Thermies protocol with repetition is carried out $M$ times consecutively, then we end up with a sequence of $N = Mn$ realizations of $X$. Given that the nearest neighbor ensemble may be very large (in fact comprising $2^{(d^2 + d)/2}$ elements), one might be led to wonder whether a relatively very small selection of $M$ elements can provide a representative sample of this ensemble. In fact it can, which simply follows from Hoeffding's inequality, as we show in the next section. In particular, let $\bar{\Sigma}$ be the sample mean of a sequence of covariance matrices $\Sigma^1 \dots \Sigma^M$ drawn from the nearest-neighbor ensemble of a target covariance matrix $\Sigma^t$.
\begin{equation}
\label{sigma-bar-def_app}
\bar{\Sigma} = \frac{1}{M}\sum_{m=1}^M \Sigma^m.
\end{equation}
Choosing some $\delta >0$, Hoeffding's inequality says \cite{hoeffding1948class, hoeffding1963probability}
\begin{equation}
\label{hoeffding-bound_app}
    \text{Pr}\left( |\bar{\Sigma}_{ij} - \Sigma^t_{ij}| \geq \delta \right) \leq 2 \exp\left(-2 \frac{M \delta^2}{ \varepsilon^2} \right).
\end{equation}

If we hold the left-hand side of Eq. \eqref{hoeffding-bound} constant, and vary $M$ (while also holding $\varepsilon$ constant, we see that $\delta \propto M^{-1/2}$, which implies that the elementwise difference between $\bar{\Sigma}$ and $\Sigma^t$ goes with $M^{-1/2}$. Importantly, as there are no factors in Eq.~\eqref{hoeffding-bound} which involve $d$, this behavior is independent of dimension. In fact this is exactly what we see when random realizations of the nearest neighbor ensemble are realized and $\bar{\Sigma}$ is evaluated, as Fig.~\ref{fig:ConvergenceAvgCov} shows. We also see that a relatively small number of samples from the nearest neighbor ensemble is needed to get a good estimate of $\Sigma^t$.
\begin{figure}[t]
    \centering
    \includegraphics[width=0.5\textwidth]{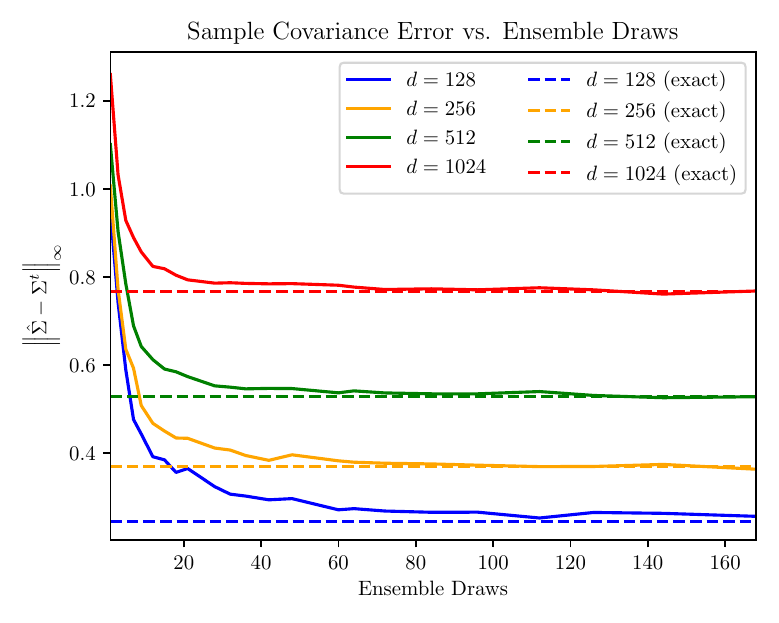}
    \caption{Infinity norm error between the sample covariance matrix $\hat{\Sigma}$ and the target covariance matrix $\Sigma^t$ for a sequence of $N = 2160$ samples, as a function of $M$, the number of covariance matrices drawn from the nearest neighbor ensemble. $\Sigma^t$ is a random positive diagonal matrix. Results averaged over $10$ random covariance matrices. Dashed lines show the infinity norm error between the target covariance and a sample covariance of a sequence drawn from the target distribution (equivalent to setting $M = N$).}
    \label{fig:ConvergenceDiagonal}
\end{figure}

This established, we consider the sample covariance of a sequence of samples obtained using the Thermies protocol with repetition. The sample covariance matrix is defined as
\begin{equation}
\label{sample-covariance-def}
    \hat{\Sigma}_{ij} = \frac{1}{N-1}\sum_{k=1}^N X^k_i X^k_j.s
\end{equation}
We would like the sample covariance matrix to converge towards the target covariance matrix $\Sigma^t$ as $N$ increases. We show in the next section that this occurs, the relevant inequality being
\begin{equation}
\label{combined-chebyshev-hoeffding-bound}
\text{Pr}\left(|\hat{\Sigma}_{ij} - \Sigma^t_{ij}|  \leq \delta\right) \geq \left(1- \frac{4 \bar{S}_{ij}}{N \delta^2}\right) \left( 1 - 2 \exp\left(-\frac{M\delta^2}{2\varepsilon^2} \right)\right),
\end{equation}

where $\delta >0$ and 
\begin{equation}
\label{s-bar-def}
    \bar{S}_{ij} = \frac{1}{M}\sum_k^{M} (\Sigma^k_{ij})^2 + \Sigma^k_{ii}\Sigma^k_{jj}.
\end{equation}

\section{Convergence of Thermies method via Hoeffding's bound}
\label{app:stats}

First suppose we draw a sequence of samples $X^{(1)} \dots X^{(n)}$ from a normal distribution $\mathcal{N}[0, \Sigma]$. We may then compute a sample covariance for this sequence, defined as
\begin{equation}
    \hat{\Sigma}_{ij}= \frac{1}{n-1} \sum_{k=1}^n X^k_i X^k_j \approx  \frac{1}{n} \sum_{k=1}^n X^k_i X^k_j ,
\end{equation}
where the approximate expression on the right-hand side is valid in the limit of large $n$. If this experiment is repeated many times, different outcomes will be obtained for $\hat{\Sigma}$, and we may ask what is the variance of this estimator. Because $\hat{\Sigma}_{ij}$ is the mean of a sequence of independent identically-distributed random variables, using the central limit theorem we can approximate it as a normally distributed random variable whose variance is the sum of the variances of the summands. Let $S_{ij}$ be the variance of $\braket{X_i X_j}$,
\begin{equation}
    S_{ij} = \braket{(X_i)^2 (X_j)^2} - \braket{X_i X_j}^2.
\end{equation}
The central limit theorem then gives
\begin{equation}
\label{sample-covariance-iid}
    \hat{\Sigma}_{ij} \sim \mathcal{N}\left[\Sigma_{ij},\frac{S_{ij}}{n} \right].
\end{equation}
Note that because $X^k$ is normally distributed, $S_{ij}$ can be expressed as
\begin{equation}
\label{S-from-Sigma}
S_{ij} = (\Sigma_{ij})^2 + \Sigma_{ii} \Sigma_{jj}.
\end{equation}

Now, imagine that we carry out the above procedure $M$ times, and each time we draw a different covariance matrix $\Sigma^m$ from the nearest neighbor ensemble of some target covariance matrix $\Sigma^t$. The total number of samples is then $N = n M$, and the sample covariance for the $N$ samples is
\begin{equation}
    \hat{\Sigma}_{ij} = \frac{1}{N-1} \sum_{m=1}^M \sum_{k=1}^{n}X^{(m)k}_i X^{(m)k}_j \approx \frac{1}{N} \sum_{m=1}^M \sum_{k=1}^{n}X^{(m)k}_i X^{(m)k}_j .
\end{equation}
We can then write the sample covariance matrix as the average of $M$ separate contributions $\hat{\Sigma}^{1}\dots \hat{\Sigma}^{M}$, one for each matrix drawn from the nearest neighbor ensemble
\begin{equation}
    \hat{\Sigma}_{ij} \approx \frac{1}{N} \sum_{m=1}^M n \hat{\Sigma}^{m}_{ij} = \frac{1}{M}\sum_{m=1}^M \hat{\Sigma}^{m}_{ij}.
\end{equation}
Therefore by the above analysis, the sample covariance matrix $\hat{\Sigma}$ is the mean of $M$ normally-distributed random variables $\hat{\Sigma}^m_{ij} \sim \mathcal{N}\left[\Sigma^m, S^m_{ij}/n\right]$. By the central limit theorem we then have
\begin{equation}
    \hat{\Sigma}_{ij} \sim \mathcal{N}\left[\bar{\Sigma}, \frac{\bar{S}_{ij}}{N}\right],
\end{equation}
where $\bar{\Sigma} = \frac{1}{M}\sum_{m=1}^M \Sigma^m$ and $\bar{S} = \frac{1}{M}\sum_{m=1}^M S^m$. According to Chebyshev's inequality \cite{serfling_probability_1974, saw_chebyshev_1984},
\begin{equation}
\label{Chebyshev-bound-app}
    \text{Pr}\left(|\hat{\Sigma}_{ij} - \bar{\Sigma}_{ij}| \geq k\sqrt{\frac{\bar{S}_{ij}}{N}}\right)\leq \frac{1}{k^2},
\end{equation}
which implies that the distance between $\hat{\Sigma}$ and $\bar{\Sigma}$ goes as $N^{-1/2}$ elementwise. The proximity of $\bar{\Sigma}$ to the target distribution $\Sigma^t$ can be bounded using Hoeffding's inequality. Note that the random variable $\Sigma^{m}_{ij}$ is bounded to an interval of length $\varepsilon$. Therefore Hoeffding's inequality says
\begin{equation}
    \text{Pr}\left( |\bar{\Sigma}_{ij} - \Sigma^t_{ij}| \geq \delta \right) \leq 2 \exp\left(-2 \frac{M \delta^2}{ \varepsilon^2} \right).
\end{equation}
Setting $\delta = k \sqrt{\frac{\bar{S}_{ij}}{N}}$ in Hoeffding's inequality gives
\begin{equation}
\label{hoeffding-bound-app}
    \text{Pr}\left( |\bar{\Sigma}_{ij} - \Sigma^t_{ij}| \geq k \sqrt{\frac{\bar{S}_{ij}}{N}} \right) \leq 2 \exp\left(-2 \frac{k^2\bar{S}_{ij}}{n \varepsilon^2} \right).
\end{equation}
Equation \eqref{hoeffding-bound-app} captures the intuitive notion that if each member of the ensemble is reused a large number of times (i.e. $n$ is large), the convergence of the sample covariance matrix is slower. Putting together Eqs. \eqref{Chebyshev-bound-app} and \eqref{hoeffding-bound-app}, we find
\begin{equation}
\text{Pr}\left(|\hat{\Sigma}_{ij} - \bar{\Sigma}_{ij}| \leq k\sqrt{\frac{\bar{S}_{ij}}{N}}\text{ and } |\bar{\Sigma}_{ij} - \Sigma^t_{ij}| \leq k\sqrt{\frac{\bar{S}_{ij}}{N}}\right) \geq \left(1 - \frac{1}{k^2} \right) \left(1 - 2\exp\left(-2 \frac{k^2 \bar{S}_{ij}}{n \varepsilon^2} \right) \right).
\end{equation}
If we then set $k\sqrt{\bar{S}_{ij}/N} = \delta/2$, we get
\begin{equation}
\text{Pr}\left(|\hat{\Sigma}_{ij} - \bar{\Sigma}_{ij}| \leq\delta/2 \text{ and } |\bar{\Sigma}_{ij} - \Sigma^t_{ij}| \leq \delta/2\right) \geq \left(1- \frac{4 \bar{S}_{ij}}{N \delta^2}\right) \left( 1 - 2 \exp\left(-\frac{M\delta^2}{2\varepsilon^2} \right)\right).
\end{equation}
Using the triangle inequality we can write
\begin{equation}
\label{combined-chebyshev-hoeffding-bound-app}
\text{Pr}\left(|\hat{\Sigma}_{ij} - \Sigma^t_{ij}|  \leq \delta\right) \geq \left(1- \frac{4 \bar{S}_{ij}}{N \delta^2}\right) \left( 1 - 2 \exp\left(-\frac{M\delta^2}{2\varepsilon^2} \right)\right).
\end{equation}
The above analysis implies that the error $\delta$ roughly goes with $M^{-1/2}$ when $N$ is held constant, which can be seen by holding the left-hand side of \eqref{combined-chebyshev-hoeffding-bound-app} constant and varying $\delta$.

\section{Implementation on thermodynamic hardware}
\label{app:hw}

Here we elaborate on our implementation of the Thermies protocol on the thermodynamic computing hardware described in Ref.~\cite{melanson2023thermodynamic}. This hardware is composed of 8 unit cells (i.e., 8 RLC circuits) that are all-to-all coupled to each other. It was used in Ref.~\cite{melanson2023thermodynamic} to invert 8x8 matrices, and that is the primitive that we focus on applying Thermies to here in this work.

This hardware encodes the covariance matrix $\Sigma$ (or precision matrix $\Sigma^{-1}$) in the values of its capacitances that can only take on specific values. Specifically, the device can only sample from zero-mean Gaussian distributions of covariance matrices with diagonal elements in $\{ 1.0, 3.2,  4.3, 6.5\}$ and off-diagonal elements in $\{ -0.47, 0.0, 0.47 \}$.

Due to the relatively low precision and specific construction of this device, the Thermies protocol needs to be modified slightly to accommodate. The difference between the protocol as described in the main text and the one needed for the experimental realization is that the imprecision, $\varepsilon$, is not uniform and has limited range.

The variant of the Thermies protocol used on this device is as follows:
\begin{itemize}
    \item Scale the precise target matrix such that all the elements are within the hardware limits.
    \item Determine the pair of nearest neighbors for each element. That is, the closest hardware values that are above and below each precise element.
    \item Calculate the weights for each element as
        \begin{align*}
            w_{ij} = \frac{|\sigma^\mathrm{t}_{ij} - \sigma^\mathrm{b-}_{ij}|}{\sigma^\mathrm{b+}_{ij} - \sigma^\mathrm{b-}_{ij}},
        \end{align*}
    where $\sigma^\mathrm{t}_{ij}$ are the target matrix elements, while $\sigma^\mathrm{b+}_{ij}$ ($\sigma^\mathrm{b-}_{ij}$) are the nearest higher (lower) imprecise matrix elements.
    \item Draw a member of the nearest neighbor matrices by transforming the precise target matrix into the imprecise matrix by probabilistically rounding up each element with probability $w_{ij}$. Repeat this $M$ times.
    \item Gather $n$ samples from each member of the $M$ members of the ensemble.
    \item Combine all $nM$ samples and compute their covariance matrix.
\end{itemize}

For the specific implementation shown in Fig.~\ref{fig:PCB_Implementation}, we employ the thermodynamic matrix inversion algorithm from Ref.~\cite{aifer2023thermodynamic}, which involves estimating the covariance matrix for a collection of samples in order to invert the precision matrix. When applying Thermies to this task, we employ 4 ensemble draws (i.e., 4 draws from the set of nearest neighbors) and then average over those ensembles draws to obtain the results.

\end{appendix}

\end{document}